\newcommand{\bigconcchoose}[1]{\def\bigconcsize{}%
  \ifx#1\displaystyle
    \let\bigconcsize\Big
  \else
    \ifx#1\textstyle
      \let\bigconcsize\big
    \fi
  \fi#1}
\newcommand{\bigxor}{\mathop{\mathchoice
  {\textstyle\bigoplus}{\textstyle\bigoplus}
  {\scriptstyle\bigoplus}{\scriptscriptstyle\bigoplus}}}
\newtheorem{lemma}{Lemma}
\newtheorem{theorem}[lemma]{Theorem}
\newtheorem{corollary}[lemma]{Corollary}
\newtheorem{definition}[lemma]{Definition}
\title{Fast Simulation of Cellular Automata by Self-Composition}
\author{Joseph Natal$^{\dag}$, Oleksiy Al-saadi$^{\ddagger}$  \\
        \small $^{\dag}$Karlsruhe Institute of Technology, {\nolinkurl{joseph.natal@kit.edu}} \\
        \small $^{\ddagger}$Sonoma State University, {\nolinkurl{alsaadio@sonoma.edu}} \\
}
\begin{document}

\maketitle
\begin{abstract}
Computing the configuration of any one-dimensional cellular automaton at generation $n$ can be accelerated by constructing and running a composite rule with a radius proportional to $\log n$. The new automaton is the original one, but with its local rule function composed with itself. Consequently, the asymptotic time complexity to compute the configuration of generation $n$ is reduced from $O(n^2)$-time to $O(n^2 / \log n)$, but with $O(n^2 / (\log n)^3)$-space, demonstrating a time-memory tradeoff. Experimental results are given in the case of Rule $30$.
\end{abstract}

\begin{keywords}cellular automata; time complexity; Rule 30; nonlinear systems
\end{keywords}


\section{Introduction}
\label{intro}
Compositions of cellular automata have been discussed in different contexts since their inception decades ago \cite{Wolfram1984}, but their relevance for time complexity improvements has yet to be determined using formal asymptotic analysis. Indeed, compositions were explored by Israeli and Goldenfeld \cite{PhysRevE.73.026203} as so-called ``coarse-graining'', and were shown to have implications for understanding emergent phenomena in complex systems. Riedel and Zenil \cite{RiedelUniversality,RiedelComposite} further explored coarse-graining, finding subsets of elementary cellular automata which emulate all others, giving evidence of a pervasive universality. Perhaps the most similar algorithm in concept to the one described herein is given by Gosper's hashed implementation of the Game of Life \cite{GOSPER198475}. This algorithm memoizes local recurrent space-time patterns for Life-like automata, but performs poorly for chaotic structures. Our focus on Rule $30$ in this paper stems from its notoriously chaotic nature and perceived lack of structure. The naive method of computing the configuration at generation $n$ of an ECA is simply to update row by row, requiring $O(n^2)$-time. In contrast, CAs with nested patterns such as the well-studied Rule $90$ or Rule $150$ contain a fractal structure which allows for a given space-time coordinate to be determined in $O(\log n)$ \cite{Culikfractal}.

\section{Preliminaries}

A \emph{cellular automaton} (CA) is composed of a regular grid of colored cells whose colors are updated according to certain rule which may be described by a \emph{rule icon}. For example, the complete set of rule icon of the well-studied Rule $30$ cellular automaton is as follows:

\begin{center}
\begin{tabular}{ c c c c c c c c}
 $\blacksquare\!\blacksquare\!\blacksquare$  & $\blacksquare\!\blacksquare\!\square$ & $\blacksquare\!\square\!\blacksquare$ & $\blacksquare\!\square\!\square$ & $\square\!\blacksquare\!\blacksquare$ & $\square\!\blacksquare\!\square$ & $\square\!\square\!\blacksquare$ & $\square\!\square\!\square$ \\ 
 $\square$ & $\square$ & $\square$ & $\blacksquare$ & $\blacksquare$ & $\blacksquare$ & $\blacksquare$ & $\square$
\end{tabular}
\end{center}

\noindent The rule icon is applied to a row of cells in a one-dimensional CA by iterating through every cell in the row, comparing the colors of a cell along with its two adjacent neighbors, and then updating the color of the cell in the succeeding row (see Fig. \ref{fig:composite_rules_spacetime}). Let $\mathcal F$ be a CA with grid colors indexed by the integers $\mathbb Z_k = \{0,1,\hdots, k-1\}$, a given initial state, and local update rule $f : \mathbb Z_k^{2r+1} \rightarrow \mathbb Z_k$. Here, $r$ is a positive integer that denotes the \emph{radius} of the CA: the number of neighbor cells to the left and right of the current cell which are taken into account when applying the rule icon. These cells are sometimes referred to as a \emph{neighborhood}. For example, the rule icon of Rule $30$ shows that the rule has radius $1$ (see the above rule icon). 

We will restrict our study of CA to those with two colors by defining $\Sigma = \mathbb Z_2 = \{0,1\} = \{\square,\blacksquare\}$. A CA will subsequently and informally refer to one that is one-dimensional and $2$-color. If a $2$-color CA $\mathcal F$ has radius $1$ then we call $\mathcal F$ an \emph{elementary cellular automaton} (ECA). Wolfram popularized a systematic numbering scheme for ECA where for any given Rule $k$ the corresponding rule icon can be obtained by taking the binary digits of integer $k$ and assigning them as outputs to the set of binary $3$-tuples: the numbers $0$ through $7$ in decimal. Using the above example rule icon, the output cells are, sequentially, $00011110_2$ which is $30_{10}$. There are $2^{2^3} = 256$ ECA, though many are mirrored and therefore exhibit the same behavior.

A \emph{configuration} $X = \langle x_{-(N-1)/2}, \dots ,x_{(N-1)/2} \rangle$ refers to a row of cells having $N \in 2 \mathbb N + 1$ indices. It will be convention that the median element in the array has index $i=0$. When necessary to distinguish, $X_n^{\mathcal F}$ denotes the bi-infinite configuration of $\mathcal F$ at generation $n$ (i.e. after $n-1$ applications of its local rule beyond the specified initial configuration). It follows that $X_n^{\mathcal F} = \langle \dots, x_{i-2}^n, x_{i-1}^n,x_i^n, x_{i+1}^n, x_{i+2}^n, \dots \rangle$. We explicitly define the \emph{global update} of a configuration as  
\begin{equation} \label{eq:global_update}
\begin{split}
      X_{n+1}^{\mathcal F} = \bigxor_{i=-\infty}^{\infty} f\big( \bigxor_{j=-r}^r x^n_{i+j}\big)
\end{split}
\end{equation}
and for an individual cell in $X_{n+1}^{\mathcal F}$
\begin{equation} \label{eq:global_update_single_index}
      x^{n+1}_i = f\big( \bigxor_{j=-r}^r x^n_{i+j} \big)
\end{equation}
where $\bigxor$ denotes concatenation of cells. Borrowing from physics terminology, $n$ reflects a dimension in time while spatial dimension is reflected by the $i$-th index of $X_n^{\mathcal F}$. A CA $\mathcal F$ has a \emph{simple seed} if $X_1^{\mathcal F} = \langle \cdots \square,\square,\square,\blacksquare,\square,\square,\square,\cdots \rangle$. 

The rule icons of any CA with radius $r$ can be equivalently represented through an algebraic form as a function $f : \Sigma^{2r+1}  \rightarrow \Sigma$. For example, for any generation $n+1$ and index $i$, the color $x^{n+1}_i$ of Rule $150$ can be expressly computed by the following trivariate function:

\begin{equation} \label{eq:150algebraic}
x_i^{n+1} = x_{i-1}^n + x_i^n + x_{i+1}^n  \mod 2
\end{equation}
The Rule $150$ CA is well known for its fractal self-symmetry and its algebraic form is similarly straightforward. Meanwhile, the following nonlinear discrete dynamical equation governs Rule $30$: 

\begin{equation} \label{eq:rule30nonlin}
x^{n+1}_i = \overbrace{x^{n}_{i-1} + x^{n}_{i} + x^{n}_{i+1} 
   }^\text{Rule 150} \;\;\; +\overbrace{x^{n}_{i} x^{n}_{i+1}}^\text{nonlinear term} \;\;  \mod 2
\end{equation}
or equivalently, using boolean algebra: 
\begin{equation} \label{eq:30bool}
x_i^{n+1} = x_{i-1}^n \oplus (x_i^n \lor x_{i+1}^n)
\end{equation}

\noindent Notice that the rule function of Rule $30$ contains the algebraic form of Rule $150$. Although this component of the function is rather simple in and of itself, the abrupt complexity of Rule $30$ arises from the nonlinear term. We introduce a special class of graphs that will be used for algorithmic analysis in the following lemmas.

\begin{definition}
    A De Bruijn graph is a directed graph representing overlaps between sequences of symbols. For a $2$-color CA $\mathcal H$ of radius $r$ and local rule $h$, its De Bruijn graph $B^{\mathcal H}$ will have $|\Sigma^{2r}|=2^{2r}$ vertices, each representing a cell neighborhood of length $2r$. The vertices each have $2$ outgoing edges corresponding to the $(2r+1)$-th cell color. These edges are directed to the vertices that represent the neighborhood realized after a unit shift of the original neighborhood. We associate a color with every edge that corresponds to the output of the rule function $h$ applied to the traversed neighborhood.
    \label{def:deB}
\end{definition}

The De Bruijn graphs have been well studied by Wolfram \cite{Wolfram1984} in the context of cellular automata. Fig. \ref{fig:state_transition_graph} details the De Bruijn graph of Rule 30, and Fig. \ref{fig:rule_30_compositions} shows that these graphs have a very regular structure. These graphs are used for DNA sequence assembly, and so optimization and compression of them is an active area of research \cite{10.1093/bioinformatics/btx067}. 
\section{Automata Self-Composition}

In this section, we show the local rule function of a cellular automata can be composed with itself in order to create a new CA satisfying special constraints on its configurations. 

\begin{definition} \label{def:composition}
Let the composition of two rules with functions $f_1$ and  $f_2$ be defined as 
\begin{align} \label{eq:composition}
  (f_2 \circ f_1)(X) =  f_2\big( \bigxor_{i=-r_2}^{r_2} f_1(\bigxor_{j=-r_1}^{r_1}x_{i+j})\big) 
\end{align}
This composite rule is therefore a local rule with radius $r_1 + r_2$, and a function mapping $ X \in \Sigma^{2 (r_1 + r_2) + 1} \rightarrow \Sigma$.
\end{definition}

By Definition \ref{def:composition}, the composite rule $f_2 \circ f_1$ consists of applying $f_1$ to all contiguous subarrays of length $2 r_1 + 1$ in $X$ with truncated boundary conditions in indexed order and then $f_2$ to the concatenated results. In short, the configuration is updated according to $f_1$, then $f_2$.

The next lemma is vital to the correctness of our algorithm.

\begin{lemma}
\label{lem:doubler} 
Given a CA $\mathcal H$ with local rule $h$ and radius $r$, there exists a CA $\mathcal G$ with local rule $g$ and radius $2r$ such that for every $n \in \mathbb{N}$ we have that $X_{2n-1}^{\mathcal H} = X_n^{\mathcal G}$. 
\end{lemma}
 
\begin{proof}
For self-composition, Eq. \ref{eq:composition} is reduced to 

\begin{align} \label{eq:compositionReduced}
    (h \circ h)(X) : \big\{\Sigma^{4r + 1} \rightarrow \Sigma : X \;\mapsto\; h\big( \bigxor_{i=-r}^{r} h(\bigxor_{j=-r}^{r}x_{i+j})\big) \big\} 
\end{align}
Replacing $h$ with $h\circ h$ in the global update (Eq. \ref{eq:global_update}) and assigning a generation $n$ gives
\begin{equation}
\label{eq:replace_h}
\begin{split}
    X^{\mathcal G}_{n+1} = \bigxor_{i=-\infty}^{\infty} (h \circ h)\big( \bigxor_{j=-2r}^{2r} x^n_{i+j}\big) = \bigxor_{i=-\infty}^{\infty} h\big( \bigxor_{j=-r}^{r} h(\bigxor_{k=-r}^{r}x^n_{i+j + k})\big)
\end{split}
\end{equation}
and substituting $x^{n+1}_{i+j} = h(\bigxor_{k=-r}^{r}x^n_{i+j + k})$ from Eq. \ref{eq:global_update_single_index}

\begin{equation} \label{eq:compositionEquivalent}
\begin{split}
     = \bigxor_{i=-\infty}^{\infty} h\big( \bigxor_{j=-r}^{r} x^{n+1}_{i+j}\big) = X^{\mathcal{H}}_{n+2}
\end{split}
\end{equation}
Then the composite update is simply two global updates according to $h$ in sequence. Hence, we construct $\mathcal G$ as follows: Let $g = h \circ h$ be the local rule function of $\mathcal G$ with the same initial configuration as $\mathcal H$ (i.e. $X_1^{\mathcal H} = X_1^{\mathcal G}$). Then we have that $X_{3}^{\mathcal H} = X_{2}^{\mathcal G}$ by Eq. \ref{eq:compositionEquivalent}. It follows that by repeated application of the global update for $\mathcal G$ for $n=1,2,3,\hdots$, the evolution for $\mathcal H$ will have $1, 3, 5, 7, \hdots = 2n-1$ and $X_{2n-1}^{\mathcal H} = X_n^{\mathcal G}$. Intuitively, running $\mathcal G$ for a single generation provides the same output configuration as running $\mathcal H$ for two generations (Fig. \ref{fig:composite_rules_spacetime}).
\end{proof}

\begin{table}[b]
    \centering
    \begin{tabular}{c|c|c}
        $k$-fold of Rule $30$ & $2r+1$ &  \textbf{Rule} \\
        \hline
         1&3 & $30$\\
         2&5 & $535945230$\\
         3&7 & $4245223 \hdots 81390$\\
         4&9 & $1672702 \hdots 88750$\\
    \end{tabular}
    \caption{Several rules are shown that can be constructed through self-composition of Rule $30$, along with their radii.}
    \label{tab:rule_comp_debruijn_debruijn}
\end{table}

\begin{figure}
    \centering
    \includegraphics[width=0.9\linewidth]{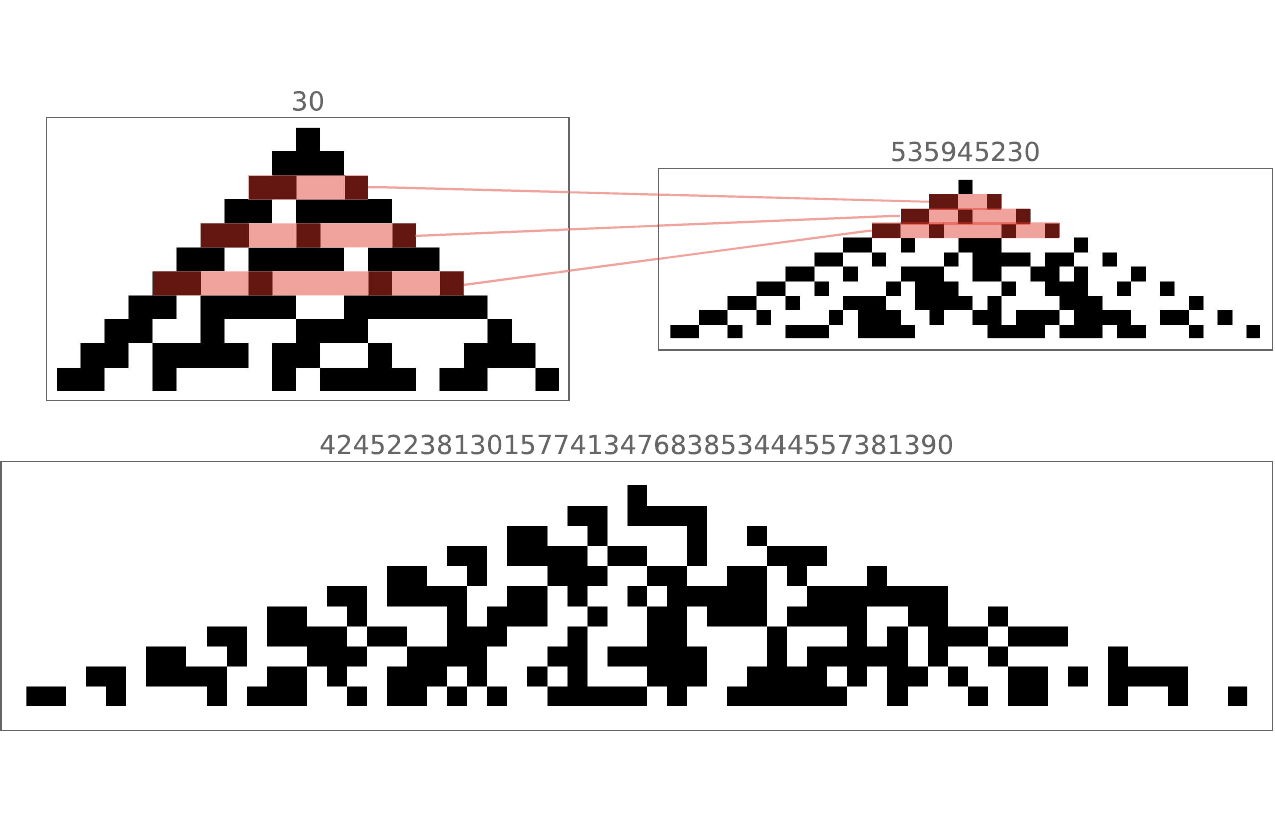}
    \caption{Evolution of Rule $30$ beginning with a simple seed and its $2$-fold and $3$-fold composition with itself (see Definition \ref{def:folding}). Highlighted rows show a sample of equivalent configurations.}
    \label{fig:composite_rules_spacetime}
\end{figure}

Now we extend this composition of automata to arbitrary $k$ and define a new terminology to describe CA that are of this type:
\begin{definition}
\label{def:folding}
Given a CA $\mathcal H$ with local rule $h$ and a CA $\mathcal F$ with local rule $f$, let $h_i = h$ for each $1 \leq i$. We say that $\mathcal F$ is a $k$-fold composition of $\mathcal H$ if $f = h_1 \circ h_2 \circ \dots \circ h_{k-1} \circ h_k$. Equivalently, we write $f = h^{(k)}$.
\end{definition}

It follows by the conclusion of Lemma \ref{lem:doubler} that $\mathcal G$ is a $2$-fold composition of $\mathcal H$. It is obvious self-composition is associative because the composing functions are identical by definition.

\begin{lemma}
\label{lem:kfoldcomplexity} 
Given a CA $\mathcal F$ that is a  $k$-fold composition of $\mathcal H$, if $X_1^{\mathcal F} = X_1^{\mathcal H}$ then $X_{k n-1}^{\mathcal H} = X_n^{\mathcal F}$. The composite local rule $f$ has radius $kr$.
\end{lemma}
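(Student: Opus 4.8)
The plan is to promote the single doubling step of Lemma~\ref{lem:doubler} to an arbitrary number of folds by induction on $k$, the driving observation being that one global update under a composite rule factors into the global updates under its constituent rules applied in order. The radius claim is the easy half: by Definition~\ref{def:composition} composing rules of radii $r_1$ and $r_2$ produces radius $r_1+r_2$, so writing $h^{(k)}=h\circ h^{(k-1)}$ and inducting gives $f=h^{(k)}$ radius $r+(k-1)r=kr$.

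For the generation correspondence, let $\Phi_f$ denote the global update operator of Eq.~\ref{eq:global_update}, so that $X_{n+1}^{\mathcal F}=\Phi_f(X_n^{\mathcal F})$. The heart of the argument is the operator identity $\Phi_{f_2\circ f_1}=\Phi_{f_2}\circ\Phi_{f_1}$, valid for any pair of local rules. This is precisely the calculation of Eqs.~\ref{eq:replace_h}--\ref{eq:compositionEquivalent} in the proof of Lemma~\ref{lem:doubler}, but with the inner and outer copies of $h$ replaced by $f_1$ and $f_2$: one expands $\Phi_{f_2\circ f_1}(X)$ through Definition~\ref{def:composition}, then uses the single-cell substitution of Eq.~\ref{eq:global_update_single_index} to collapse the inner concatenation $\bigxor_i f_1(\bigxor_j x_{i+j})$ into the once-updated configuration $\Phi_{f_1}(X)$, after which the outer application of $f_2$ is recognized as $\Phi_{f_2}$ acting on that configuration. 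I would check that the nested concatenations over the bi-infinite index set telescope without drift and that the truncated boundary conditions mentioned after Definition~\ref{def:composition} introduce no index mismatch.

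Granting this identity, associativity of self-composition (already noted in the text) lets me induct on $k$: $\Phi_{h^{(k)}}=\Phi_{h\circ h^{(k-1)}}=\Phi_h\circ\Phi_{h^{(k-1)}}=\Phi_h\circ\Phi_h^{\,k-1}=\Phi_h^{\,k}$, so a single application of $f$ is exactly $k$ successive applications of $h$. Starting from the shared seed $X_1^{\mathcal F}=X_1^{\mathcal H}$ and iterating $n-1$ times then yields $X_n^{\mathcal F}=\Phi_f^{\,n-1}(X_1^{\mathcal H})=\Phi_h^{\,k(n-1)}(X_1^{\mathcal H})$, i.e.\ generation $n$ of $\mathcal F$ coincides with the generation of $\mathcal H$ reached after $k(n-1)$ base updates. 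The generations of $\mathcal H$ thus matched form the progression $1,\,k+1,\,2k+1,\dots$ with common difference $k$, which collapses to the odd-indexed sequence $2n-1$ of Lemma~\ref{lem:doubler} at $k=2$.

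I expect the main obstacle to be bookkeeping rather than concept. The delicate point is the off-by-one in the generation index forced by the convention that the seed is generation $1$: $n-1$ applications of $f$ correspond to generation $k(n-1)+1$ of $\mathcal H$, and keeping this offset straight is exactly what must be done to land the correspondence cleanly. The second place to be careful is the inductive expansion of the $k$-fold concatenation in Eq.~\ref{eq:composition}, where the spatial indices at successive nesting levels must be aligned so that the substitution step applies verbatim at each fold.
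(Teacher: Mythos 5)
Your argument is essentially the paper's own proof made rigorous: the paper likewise reduces the claim to iterated application of the doubling calculation (Eqs.~\ref{eq:replace_h}--\ref{eq:compositionEquivalent}), remarking only that one application of $h^{(k)}$ effects $k$ applications of $h$; your operator identity $\Phi_{f_2\circ f_1}=\Phi_{f_2}\circ\Phi_{f_1}$ followed by induction through associativity is the same idea, stated more carefully and in greater generality, and your radius induction matches the paper's (asserted) radius claim.

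There is, however, a genuine problem, and it lies not in your algebra but in your final step. Your bookkeeping yields $X_n^{\mathcal F}=\Phi_h^{\,k(n-1)}(X_1^{\mathcal H})=X_{k(n-1)+1}^{\mathcal H}$, i.e.\ the matched generations of $\mathcal H$ form the progression $1,\,k+1,\,2k+1,\dots$ But the lemma asserts $X_{kn-1}^{\mathcal H}=X_n^{\mathcal F}$, i.e.\ the progression $k-1,\,2k-1,\,3k-1,\dots$ These coincide only when $k=2$: already at $n=1$ the lemma's formula forces $X_{k-1}^{\mathcal H}=X_1^{\mathcal F}=X_1^{\mathcal H}$, which fails for a generic CA once $k\geq 3$. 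So what you have proven is a \emph{corrected} version of the lemma, not the lemma as stated; the index $kn-1$ is an error (it should be $kn-k+1$), one that the paper's own one-line induction sketch never catches. Your proposal should flag this discrepancy explicitly rather than claim to ``land the correspondence cleanly'' --- as written, you assert agreement with a formula that your own derivation contradicts, and your sole sanity check ($k=2$, where both formulas collapse to $2n-1$) is precisely the one case incapable of detecting the mismatch. The error is harmless downstream, since Theorem~\ref{thr:faster} uses only the fact that each composite update advances $\mathcal H$ by $k$ generations, which your proof does establish; but a careful proof of this lemma must either correct its statement or show why the stated indexing holds, and yours does neither.
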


 
\begin{proof}
The $k$-fold composition of $h$ is

\begin{align} \label{eq:kcomposition}
    h^{(k)}(X) : X \;\mapsto\; h\big( \bigxor_{\alpha_1=-r}^{r} h(\bigxor_{\alpha_2=-r}^{r} \hdots  \;h(\bigxor_{\alpha_k=-r}^{r} x_{\alpha_1 + \alpha_2 + \hdots + \alpha_k})\big) \big\} 
\end{align}

\noindent and maps $ X \in \Sigma^{2(k r) + 1} \rightarrow \Sigma$. The proof is given by induction on repeated application of Lemma \ref{lem:doubler}. Instead of two updates ($k=2$), there are $k$ configuration updates per input configuration corresponding to the number of nested applications of $h$ by Eq. \ref{eq:kcomposition}.
\end{proof}

Next, we show that a $k$-fold composition of a CA can be quickly computed:

\begin{lemma}
    Given a CA $\mathcal H$, 
    the $k$-fold composition $\mathcal F$ can be computed in $O(k^2 2^{2kr})$-time. Moreover, $B^{\mathcal F}$ can be constructed with no additional run-time complexity.
    \label{lem:fastFold}
\end{lemma}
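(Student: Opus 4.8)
The plan is to tabulate the composite local rule $f = h^{(k)}$ over all of its possible neighborhoods and to observe that this single tabulation simultaneously produces every edge color of the De Bruijn graph $B^{\mathcal F}$. By Lemma~\ref{lem:kfoldcomplexity} the rule $f$ has radius $kr$, so it is a function $\Sigma^{2kr+1}\to\Sigma$ with $2^{2kr+1}$ inputs. By Definition~\ref{def:folding} together with Definition~\ref{def:composition}, evaluating $f$ on a neighborhood $X\in\Sigma^{2kr+1}$ is nothing but applying $h$ to $X$ in sequence $k$ times under truncated boundary conditions. I would organize this as a shrinking triangle of updates: layer $0$ is $X$ with $2kr+1$ cells, and since each application of $h$ retains only the positions admitting a full width-$(2r+1)$ neighborhood, layer $j$ has $2(k-j)r+1$ cells and the triangle collapses to the single cell $f(X)$ at layer $k$.

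First I would bound the cost of one such evaluation. The number of applications of $h$ equals the number of cells generated across layers $1$ through $k$, namely $\sum_{m=0}^{k-1}\big(2mr+1\big)=rk(k-1)+k=O(rk^2)$. Because $\mathcal H$ is fixed, its radius $r$ is a constant and each application of $h$ is an $O(1)$ table lookup, so evaluating $f$ at one neighborhood costs $O(k^2)$. Running this over all $2^{2kr+1}=O(2^{2kr})$ inputs then costs $O\!\left(k^2 2^{2kr}\right)$, which is exactly the claimed time bound; enumerating the inputs themselves adds only $O(k\,2^{2kr})$ and is absorbed.

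For the second claim I would make explicit the bijection between the inputs of $f$ and the edges of $B^{\mathcal F}$. By Definition~\ref{def:deB}, $B^{\mathcal F}$ has $2^{2kr}$ vertices, one per length-$2kr$ neighborhood, and each vertex $v=v_1\cdots v_{2kr}$ emits one edge for each symbol $c\in\Sigma$; that edge points to the shifted vertex $v_2\cdots v_{2kr}c$ and carries the color $f(v_1\cdots v_{2kr}c)$. Hence each of the $2^{2kr+1}$ edges is indexed by a distinct width-$(2kr+1)$ neighborhood, and its color is precisely the rule-table entry already computed above; only the vertex set and shift-induced adjacency remain, and these depend only on $kr$ and can be wired together in $O(2^{2kr})$ time, dominated by $O(k^2 2^{2kr})$. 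The one delicate point throughout is the per-neighborhood accounting: the quadratic factor $k^2$ is exactly the triangular number of cell updates needed to collapse a width-$(2kr+1)$ input under $k$ nested layers of $h$, and one must charge this work only once rather than recomputing it when assigning edge colors. With that identification in place the ``no additional run-time'' conclusion is immediate, so the structural observation about $B^{\mathcal F}$ is the real content while the time bound reduces to the summation above.
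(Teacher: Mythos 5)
Your proposal is correct and follows essentially the same route as the paper: enumerate all $2^{2kr+1}$ neighborhoods of the radius-$kr$ composite rule, charge $O(k^2)$ per neighborhood for the shrinking triangle of $h$-applications, and observe that the De Bruijn graph's edge colors are exactly these tabulated values while its shift-induced adjacency costs only $O(2^{2kr})$ to wire up. Your version is in fact slightly more careful than the paper's (you make the triangular sum and the edge--neighborhood bijection explicit, and you correctly give $B^{\mathcal F}$ its $2^{2kr}$ vertices, where the paper loosely writes $2^{2r}$), but the argument is the same.
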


\begin{proof}
Note that $r$, the initial rule radius, is constant. It is the time complexity scaling with $k$ that is to be resolved. By Lemma \ref{lem:kfoldcomplexity}, $h^{(k)}$ has radius $kr$. In order to compute $h^{(k)}$, the local rule that governs $\mathcal F$, there are $|\Sigma^{2kr+1}| = 2^{2kr+1}$ configurations (i.e. $(2kr + 1)$-tuples) that need to be iterated over. Each configuration requires $O(k^2)$-time to compute (see the example below).

The corresponding De Bruijn graph $B^{\mathcal F} = (V,E)$ is straightforward to construct. We have that $|V| = 2^{2r}$ vertices. By Definition \ref{def:deB}, the De Bruijn graph $B^{\mathcal F}$ (see Fig. \ref{fig:rule_30_compositions}) has outgoing edges

\[E = \{ v_i \in V ~|~ v_i \rightarrow (v_{2i \text{ mod } |V|},v_{(2i+1) \text{ mod } |V|})\}\]
where index $i$ in binary is equal to the corresponding cell configuration in $\Sigma^{2r}$. The color of each edge is determined by the computation in the first half of this lemma.
\end{proof}

\emph{Example. } We illustrate the time complexity with a concrete example for $r=1$, $k=3$ and $h$ as Rule $30$. This is a $3$-fold composition of Rule 30. Since we have $2(kr) + 1 = 7$, there are $2^7$ possible input configurations. Let $X = \langle \square, \square, \blacksquare, \blacksquare, \square, \blacksquare, \square \rangle \in \Sigma^7$. Applying the rule $h$ once to $X$ in accordance with Definition \ref{def:composition} gives $\langle \blacksquare, \blacksquare, \square, \square, \blacksquare \rangle$. Applying $h$ twice more gives $\langle \square, \blacksquare, \blacksquare \rangle$ and then $\blacksquare$. So $h^{(3)}(X) = \blacksquare$. Clearly, there is a $O(k^2)$ array of cells which have been updated.
\\~

The use of a De Bruijn graph will provide sufficient algorithmic analysis for running a $k$-fold composition.

\begin{lemma}
Let $\mathcal H$ be a CA having a local rule $h$, and radius $r$. Let $X_{n}^{\mathcal H}$ be the current configuration and let $B^{\mathcal H}$ be the De Bruijn graph of $\mathcal H$. Then, $X_{n+1}^{\mathcal H}$ can be computed in $O(n)$-time.
\label{lem:fastRadius}
\end{lemma}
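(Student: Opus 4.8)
The plan is to show that computing $X_{n+1}^{\mathcal H}$ from $X_n^{\mathcal H}$ reduces to a single left-to-right walk through $B^{\mathcal H}$ whose length is proportional to the number of cells that can be active at generation $n$. I would organize the argument in three parts: first bound the width of the configuration, then describe the walk, and finally account for the per-step cost.

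First I would bound the support of $X_n^{\mathcal H}$. Starting from a simple seed, only one cell is active at generation $1$, and a single application of the local rule can extend the set of active cells by at most $r$ on each side. Hence by induction the active region of $X_n^{\mathcal H}$ is contained in an interval of at most $2r(n-1)+1$ cells; since $r$ is a fixed constant this is $O(n)$ cells, and every cell outside the interval agrees with the quiescent background.

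Next I would describe the walk. I interpret the $2r$ leftmost cells of the (padded) active window as a vertex of $B^{\mathcal H}$. By Definition \ref{def:deB} each such vertex has exactly two outgoing edges, one for each possible color of the next input cell, so reading the next cell $x_{i+r}^{n}$ selects which edge to follow. The edge traversed carries the color output by $h$ on the $2r+1$ cells spanned by the transition, which by Eq. \ref{eq:global_update_single_index} is precisely $x_i^{n+1}$. Following the edge simultaneously produces the successor vertex, namely the $2r$-window shifted one cell to the right, so the walk proceeds cell by cell and emits one output color per step, reproducing the global update of Eq. \ref{eq:global_update} exactly. Each step then costs $O(1)$: choosing an outgoing edge from the current vertex is a constant-size branch among two alternatives, reading its associated color is a single lookup, and advancing to the successor vertex is immediate from the edge. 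With one step per output cell, the walk has $O(n)$ steps after the constant boundary padding needed to capture the newly activated cells at each end, giving $O(n)$-time overall.

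The step I expect to require the most care is the support bound together with the boundary bookkeeping: I must initialize the walk a constant number of cells to the left of the active region (and terminate it a constant number to the right) so the cells activated at the fringe of the light cone are produced, and I must confirm that the overlap structure of consecutive neighborhoods is exactly what lets each De Bruijn transition update the current vertex in $O(1)$ rather than re-evaluating $h$ from scratch. Once those two points are settled the asymptotic count is immediate.
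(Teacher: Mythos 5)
Your proposal is correct and follows essentially the same route as the paper's proof: a left-to-right walk on $B^{\mathcal H}$ with $O(1)$ cost per edge traversal, applied to a configuration of size $O(n)$. The only difference is one of detail---you explicitly prove the $O(n)$ support bound via the light-cone argument and spell out the boundary padding, whereas the paper simply asserts that $X_n^{\mathcal H}$ has size $O(n)$ and that the walk begins and ends at the quiescent vertex $\{0\}^{2r}$.
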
 

\begin{proof} To compute $X_{n+1}^{\mathcal H}$, a walk is made on the graph $B^{\mathcal H}$ (see Fig. \ref{fig:walkGraph} for an example where $\mathcal H$ is Rule 30). Each edge traversed is constant-time, where the edge chosen is based on the color of the next cell in the configuration. For any finite configuration length, the walk will begin and end at $\{0\}^{2r}$. Because $X_{n}^{\mathcal H}$ has size $O(n)$, iterating through each color in the configuration to compute $X_{n+1}^{\mathcal H}$ requires $O(n)$-time.
\end{proof}

We are prepared to complete our main result:

\begin{theorem}
\label{thr:faster}
Let $\mathcal H$ be a CA with an arbitrary initial configuration, a local rule $h$, and radius $r$. The configuration $X_{n}^{\mathcal H}$ can be computed in $O( n^2 / \log n)$-time.
\end{theorem}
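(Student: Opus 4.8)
The plan is to compute $X_n^{\mathcal H}$ with a two-phase algorithm: a preprocessing phase that builds a $k$-fold composition $\mathcal F$ of $\mathcal H$ together with its De Bruijn graph $B^{\mathcal F}$, and an execution phase that advances $\mathcal F$ generation by generation using $B^{\mathcal F}$. The free parameter $k$ will be fixed at the end to $\Theta(\log n)$ so that the two phases balance. Throughout I treat $r$ as a constant and assume the initial configuration has bounded support, so that by the light-cone property the configuration $X_t^{\mathcal H}$ has width $O(t)$; only these $O(t)$ cells need be stored or traversed.

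First I would invoke Lemma \ref{lem:kfoldcomplexity}: running $\mathcal F$ for $m$ generations reproduces $X_{km-1}^{\mathcal H}$. Hence to reach generation $n$ I take $m = \lfloor (n+1)/k \rfloor = O(n/k)$, landing on $X_{km-1}^{\mathcal H}$ with $km-1 \le n$, and then finish by applying $h$ directly for the remaining $n-(km-1) < k$ generations. By Lemma \ref{lem:fastRadius} each direct step costs $O(n)$, so this clean-up contributes $O(kn) = O(n\log n)$, dominated by the target bound.

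Next I would bound the execution phase. Applying Lemma \ref{lem:fastRadius} to $\mathcal F$ via $B^{\mathcal F}$, computing $X_{j+1}^{\mathcal F}$ from $X_j^{\mathcal F}$ costs time proportional to the width of $X_j^{\mathcal F}$, which equals the width of $X_{kj-1}^{\mathcal H}$ and is therefore $O(kj)$. Summing over $j=1,\dots,m$ gives $\sum_{j=1}^m O(kj) = O(km^2) = O\!\big(k (n/k)^2\big) = O(n^2/k)$. Separately, by Lemma \ref{lem:fastFold} the preprocessing that produces $\mathcal F$ and $B^{\mathcal F}$ costs $O(k^2 2^{2kr})$.

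The crux is the choice of $k$, and this is the step I expect to be the main obstacle, because $k$ sits in the exponent of the preprocessing cost $2^{2kr}$ but only in the denominator of the execution cost $n^2/k$. Choosing $k$ too large (say $k \ge (\log_2 n)/r$) forces $2^{2kr} \ge n^2$ and lets preprocessing dominate; choosing $k$ too small leaves the execution cost at $\Theta(n^2)$. The resolution is to set $k = \Theta(\log n)$ with constant strictly below $1/r$---for concreteness $k = \lfloor (\log_2 n)/(2r) \rfloor$, so that $2^{2kr} = O(n)$. Then preprocessing costs $O(k^2 2^{2kr}) = O(n (\log n)^2)$ and execution costs $O(n^2/k) = O(n^2/\log n)$; adding the $O(n\log n)$ clean-up, the dominant term is $O(n^2/\log n)$, as claimed. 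Pushing the constant up toward $1/r$, so that $2^{2kr} = \Theta(n^2/(\log n)^3)$, keeps the total time at $O(n^2/\log n)$ while simultaneously realizing the space bound advertised in the abstract, but that refinement is not needed for the time statement.
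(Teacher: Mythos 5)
Your proposal is correct, and its skeleton coincides with the paper's: build $h^{(k)}$ and $B^{\mathcal F}$ by Lemma \ref{lem:fastFold}, advance $\mathcal F$ by walks on $B^{\mathcal F}$ by Lemma \ref{lem:fastRadius}, bound the running phase by $O(n^2/k)$, and take $k = \Theta(\log n)$. The genuine difference is at the crux you yourself flagged, the choice of $k$. The paper demands exact balance between the two phases, $n^2/k = k^2 2^{2kr}$ (Eq.~\ref{eq:bigequality}), and solves this transcendental equation via the Lambert $W$ function to conclude $k \sim \log n$. You instead pick $k = \lfloor (\log_2 n)/(2r) \rfloor$ outright, so that $2^{2kr} \le n$ and preprocessing, $O(n(\log n)^2)$, is dominated without solving any equation; this is more elementary, and it also repairs two details the paper passes over silently --- the clean-up when $n$ is not of the form $km-1$ (your $O(n\log n)$ bound on fewer than $k$ residual applications of $h$), and the bounded-support/light-cone hypothesis under which configurations have width $O(t)$, which Lemma \ref{lem:fastRadius} tacitly assumes. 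What exact balance buys the paper is Corollary \ref{cor:memUsage} and a better constant: at the balance point $k$ is roughly twice yours (run time scales as $n^2/k$), and the rule table then has $2^{2kr+1} \sim n^2/(\log n)^3$ entries, which is the time--memory tradeoff advertised in the abstract. Your smaller $k$ achieves the same $O(n^2/\log n)$ time with only an $O(n)$-entry table --- asymptotically a sharper space bound than the corollary's --- and your closing remark about pushing the constant toward $1/r$ correctly identifies how to recover the paper's regime.
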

 
\begin{proof} 
By Lemma \ref{lem:fastFold}, we require $O(k^2 2^{2kr})$-time to generate the $k$-fold composition $h^{(k)}$ of $\mathcal H$. Let $\mathcal F$ by the CA governed by the local rule $h^{(k)}$, and by the same lemma we construct $B^{\mathcal F}$. By Lemma \ref{lem:fastRadius}, we can compute the next generation of any configuration of $\mathcal F$ in $O(n)$. 

From a simple geometric argument it follows that the $k$-fold composition $h^{(k)}$ reduces the time complexity from $O(n^2)$ to $O(n^2 / k)$: for each row that is computed, $k$ are skipped relative to the original rule. Then there exists an optimal maximum $k$ such that the time complexity to run the simulation from the initial configuration to $n$ and compose the rule is equivalent:

\begin{equation} \label{eq:bigequality}
    \frac{n^2}{k} = k^2 2^{2kr}
\end{equation}
up to an arbitrary constant. It is assumed that $n, k$ are much larger than $r$, the initial rule radius. If the time complexities are equal, the total complexity is the complexity of either operation multiplied by a constant factor. The solution to the equation is then the Lambert $W$ function (also called the ``product logarithm")

\begin{equation} \label{eq:k_exact}
    k(n) = \frac{3}{2 r \ln (2)} W_0 \bigg( \frac{2r \ln(2)}{ 3 } n^{2/3}\bigg) \sim \log n
\end{equation}
by using $W_0(n) \sim \log n$ for large $n$ and simplifying. Thus it is shown that the time complexity is dominated by the exponential number of states for a given $k$. In fact, the prefactor $k^2$ in the cost to generate the composite rule output can be any positive polynomial and not change the asymptotic time complexity.
\end{proof}

\begin{corollary}
The configuration $X_n^{\mathcal H}$ satisfying the conclusion of Theorem \ref{thr:faster} can be computed using $O(n^2 / (\log n)^3)$-space.
\label{cor:memUsage}
\end{corollary}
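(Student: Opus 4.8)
The plan is to account separately for the two persistent data structures that the algorithm of Theorem~\ref{thr:faster} keeps in memory, namely the composite rule $h^{(k)}$ (equivalently, its De Bruijn graph $B^{\mathcal F}$) and the working configuration, and then to show that the storage for the rule dominates and already matches the claimed bound.

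First I would import the two facts established in the proof of Theorem~\ref{thr:faster}: the optimal fold parameter satisfies $k \sim \log n$, and it is chosen so that the balancing identity $n^2/k = k^2\,2^{2kr}$ of Eq.~\ref{eq:bigequality} holds. Rearranging this identity isolates the exponential factor as $2^{2kr} = n^2/k^3$, which is precisely the quantity the space bound will hinge on.

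Next I would bound the storage for the composite rule. By Lemma~\ref{lem:fastFold}, computing $h^{(k)}$ amounts to tabulating its output over all $2^{2kr+1}$ input neighborhoods, and the associated De Bruijn graph has a number of vertices and (colored) edges proportional to this count. Since each table entry, resp. each edge color, is a single element of $\Sigma$, the whole structure occupies $O(2^{2kr})$ space. Substituting the identity from the previous step gives $O(n^2/k^3)$, and applying $k \sim \log n$ yields $O(n^2/(\log n)^3)$. I would then observe that the only other state retained is the configuration $X_n^{\mathcal H}$, which by the convention following Eq.~\ref{eq:global_update_single_index} has length $O(n)$, and that the De Bruijn walk of Lemma~\ref{lem:fastRadius} needs only $O(1)$ auxiliary memory beyond it. As $O(n)$ is dominated by $O(n^2/(\log n)^3)$ for large $n$, the total space is $O(n^2/(\log n)^3)$.

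The main obstacle is the substitution step: the bound $n^2/(\log n)^3$ emerges only once one commits to the \emph{optimal} $k$ of Theorem~\ref{thr:faster} and uses the balancing equation to eliminate $2^{2kr}$ in favor of $n$ and $k$. I would take care that the asymptotic $k \sim \log n$ is applied consistently, so that $k^3 \sim (\log n)^3$, and that the lower-order $O(n)$ configuration cost genuinely drops out rather than being of comparable order; both are routine given Eq.~\ref{eq:bigequality}, but they are where a careless estimate would spoil the exponent on $\log n$.
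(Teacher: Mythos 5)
Your proposal is correct and follows essentially the same route as the paper's own proof: rearrange Eq.~\ref{eq:bigequality} to isolate $2^{2kr+1} \sim n^2/k^3$ and substitute the optimal $k \sim \log n$ from Theorem~\ref{thr:faster}. Your additional bookkeeping---explicitly noting that the $O(n)$ configuration storage and the $O(1)$ auxiliary memory of the De Bruijn walk are dominated by the rule table---is a small completeness improvement over the paper's one-line argument, not a different approach.
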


\begin{proof} Rearranging Eq. \ref{eq:bigequality} to solve for the number of states $|\Sigma^{2 k r+1} | = 2^{2 k r + 1} \sim n^2 / k^3$ and in the asymptotic limit of large $n$ we arrive at $\sim n^2 / (\log n)^3$. 
\end{proof}

\section{Experimental Results} 
In this section, we present experimental results of the method described in Theorem \ref{thr:faster}, managing to remove a $\log$ factor from the base run-time complexity of trivially running an ECA. The results are presented in Fig. \ref{fig:experiment_times} and the code is given in the Appendix. Its practicality on the general-purpose computer is limited by two things: 1. Memory access is not sequential and so is not cache-friendly 2. Random-access memory (RAM) is limited. As it turns out, the bitwise optimizations described by Wolfram \cite{Wolfram2002} are several times faster than this method for any reasonable number of generations because computers parallelize packed bitwise operations and are efficient at reading memory linearly along an array.

\begin{figure}[!h]
    \centering
    \includegraphics[width=\textwidth]{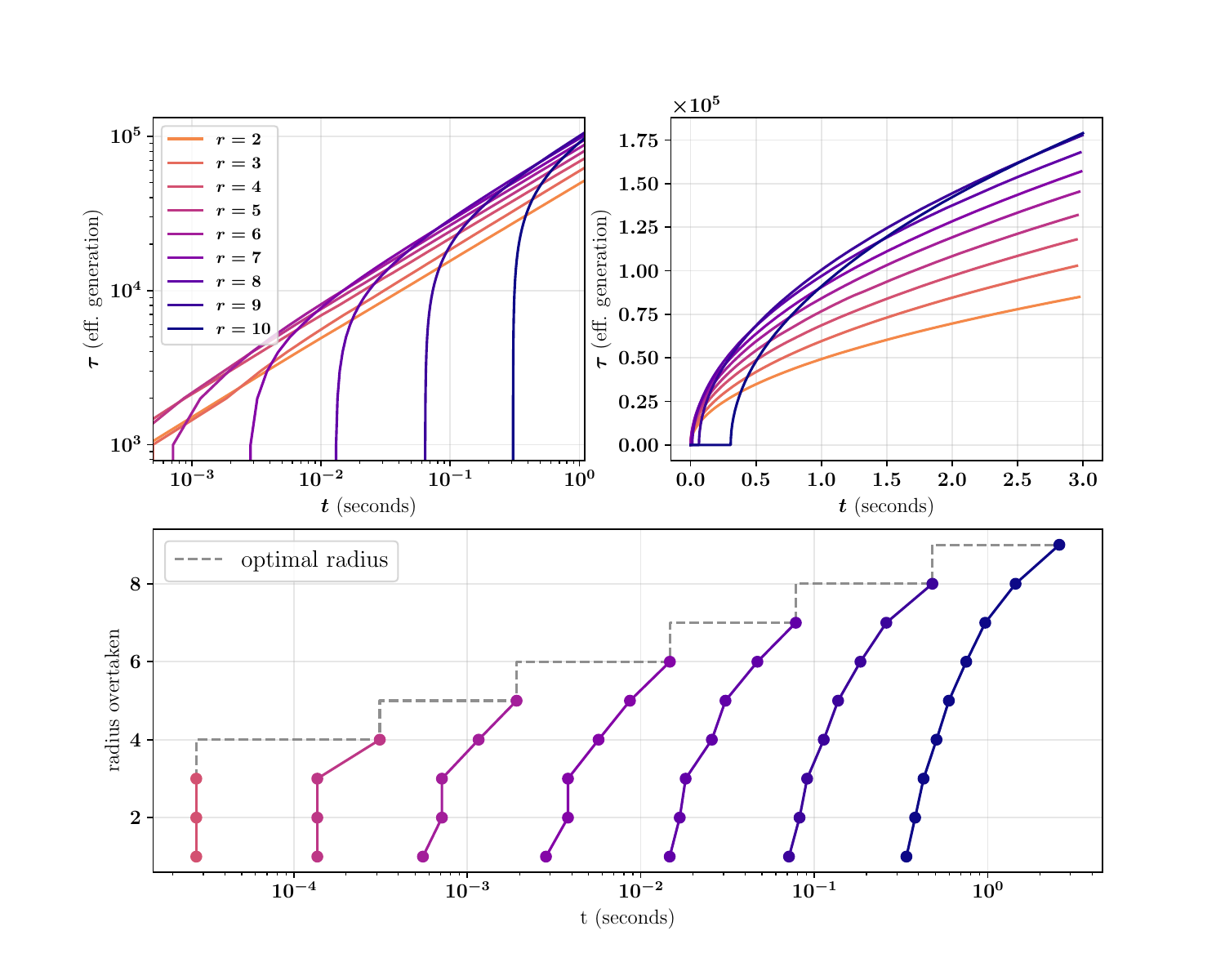}
    \caption{Constructing a larger radius automaton improves simulation speed on a given machine (Intel(R) Xeon(R) Gold 6230 CPU @ 2.10GHz). The bottom panel shows the optimal radius as a function of time. It also shows the times at which a simulation at a given radius surpasses those of a smaller radius given on the vertical axis. Given a fixed amount of simulation time $t$ seconds, the optimal radius is approximated by $\lfloor 0.37 \log_2 (t) + 9.4 \rfloor$.
    }
    \label{fig:experiment_times}
\end{figure}

\begin{figure}
    \centering
    \includegraphics[width=0.85
    \linewidth]{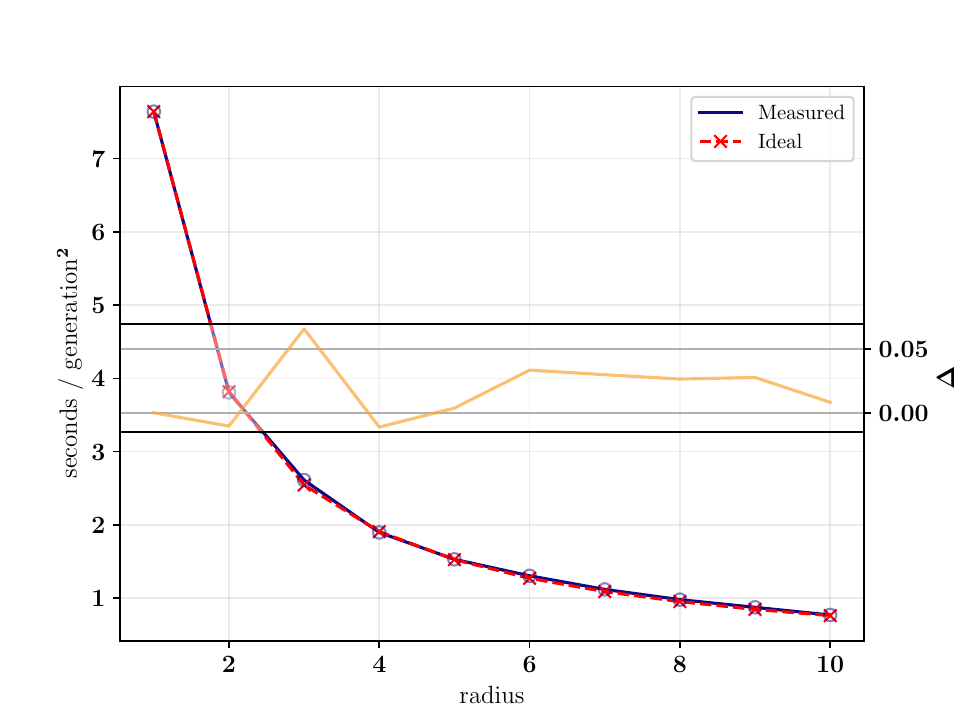}
    \caption{The machine in this experiment obeys an $n^2 / k$ time complexity scaling law ($k=r$) to compute the next generation, validating the use of Eq. \ref{eq:bigequality}. The ideal curve uses $r=1$ as a reference, so if $f(r)$ is the measured number of seconds per squared generation, the ideal is  $f_{\text{ideal}}(r)=f(r=1)/r$.}
    \label{fig:idealram}
\end{figure}

\begin{figure}[]
    \centering
    \includegraphics[width=0.8\linewidth]{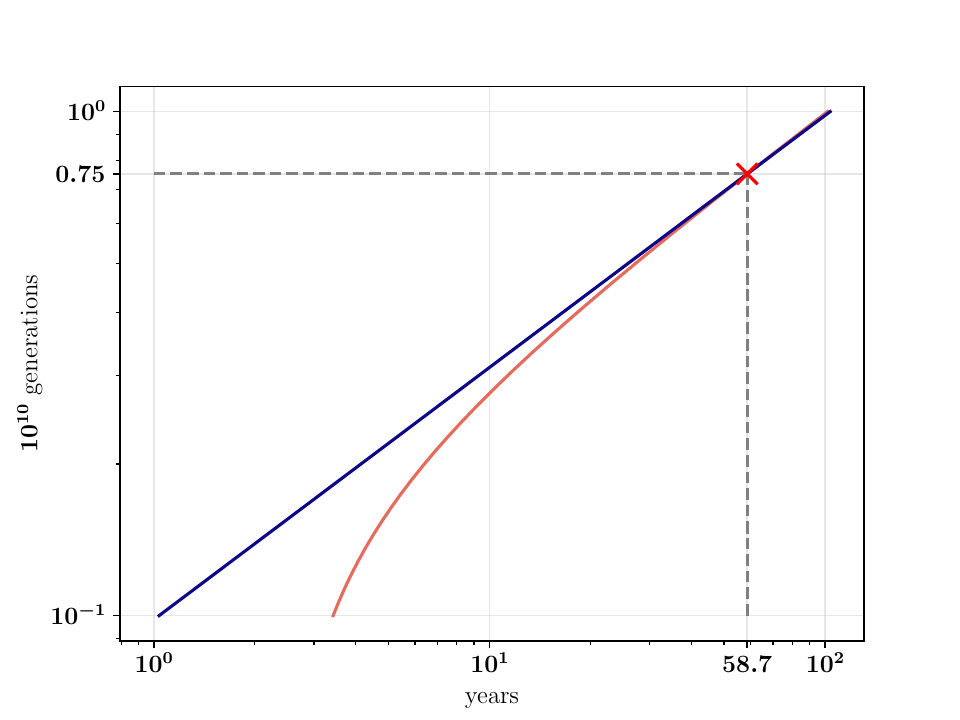}
    \caption{The $27$-fold composition (4.5 petabytes) will overtake the bitwise-optimized implementation in about $\sim 60$ years on an Intel(R) Xeon(R) Gold CPU. This takes the principle of delayed gratification to its extreme. This algorithm might only be practical with special hardware. The curves were generated by extrapolating the quadratic time versus generation curves and the exponential dependence on $r$ for composite rule creation.}
    \label{fig:overtaken}
\end{figure}

These memory problems could be reduced if contiguous memory is indexed in a De Bruijn sequence \cite{Weisstein} of order $2r+1$ on the alphabet $\Sigma$. This is because, given a rule input $\langle \blacksquare, \square, \blacksquare \rangle$, the adjacent one in the configuration is either $\langle \square, \blacksquare, \blacksquare \rangle$ or $\langle \square, \blacksquare, \square \rangle$ which is a bit-shift left and rewriting of the right-most bit (flipped depending on the endianness). If the index in memory is $q$, then the next transition will map to memory location $\lfloor q / 2 \rfloor$ or  $\lfloor q / 2 \rfloor + 2^{2r}$ which may be a large jump. For $2$-color cellular automata with an equal probability of being black or white, there would be an equal chance that the next state would be in the neighboring memory address. 

Nonetheless, Fig. \ref{fig:idealram} shows that computers access memory at speeds independent of address size in the plotted regime. And despite the aforementioned memory inefficiencies, creating large compositions would eventually overtake the bitwise-optimized implementation, as shown in Fig. \ref{fig:overtaken}. 

\section{Discussion}

This result is a single point on the domain of space-time functional dependencies that are possible for computing an arbitrary generation of an elementary cellular automaton, and its method is straightforward. On a computer, it relies on a model of computation that treats accessing memory with an address of size $O(\log n)$ as taking a single unit of time---the canonical RAM model. This model is a convention brought about by advances in physical machinery \cite{KNUTH1973189}. One might argue that a proper machine must be a boolean circuit that uses the standard basis of boolean operations. But in this computational model of circuit complexity, the interpretation of our result is clear: our algorithm increases circuit size while decreasing circuit depth in accordance with our equations. The operation of traversing an edge on the De Bruijn graph is realizable in constant circuit depth, which is what we would define as time in that model. With the contemporary interest in interaction nets, graph rewriting, and category theory, perhaps a different model of computation will become standard. 

There may exist faster machines to compute certain cellular automata, and maybe some that can be described by a simple equation over the integers. Open questions remain about the possibility of such formal machines, their potential for practical instantiation, and their implications for emergent phenomena in complex systems. 

\newpage
\section{Appendix}

In this Appendix, we show several De Bruijn graph examples and provide code implementing our algorithm.

\definecolor{sutnerred}{RGB}{196,77,77}
\definecolor{sutnergreen}{RGB}{77,196,77}

\begin{figure}[h]
    \centering
    \includegraphics[width=0.9\linewidth]{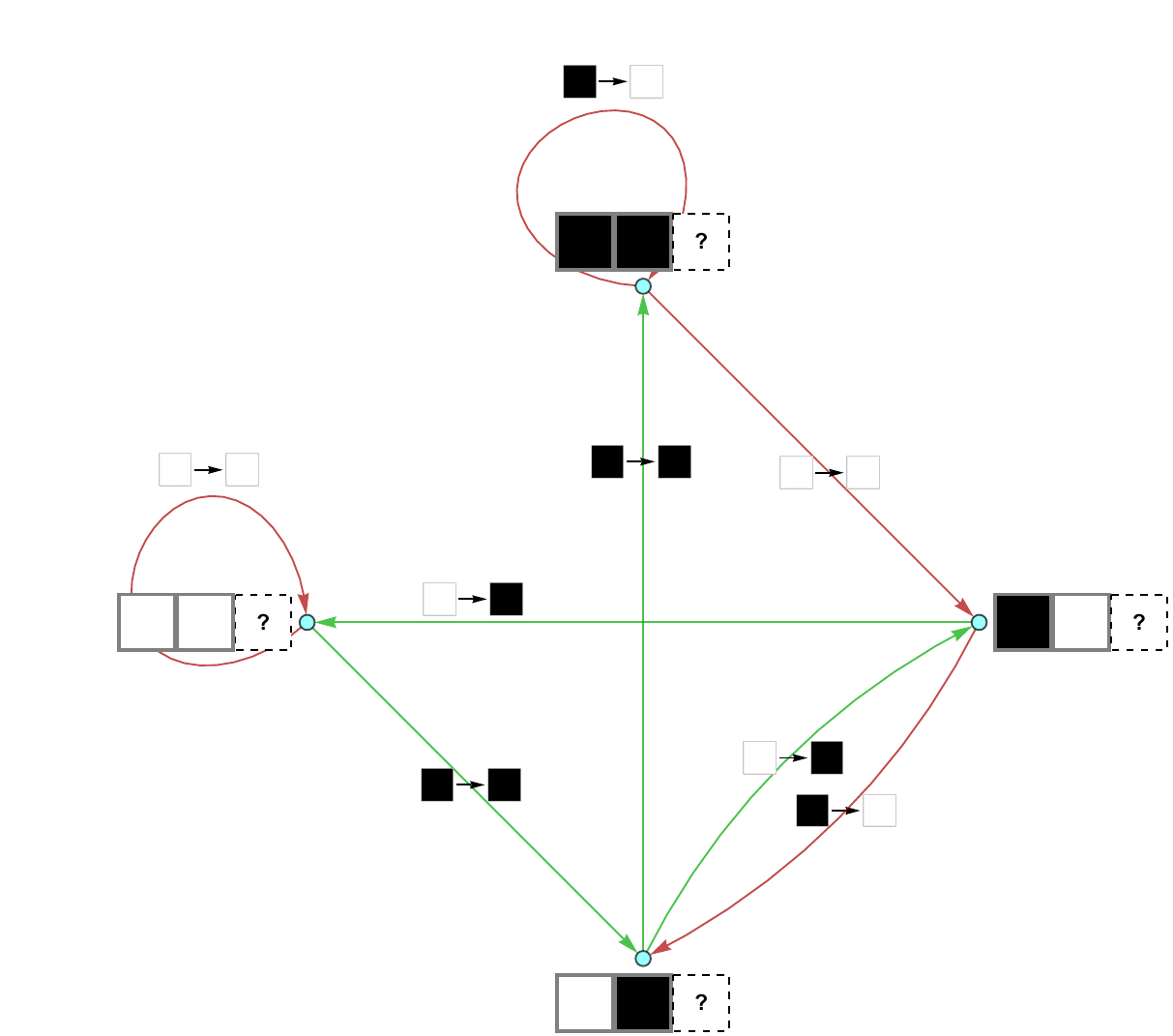}
    \caption{Rule 30's state transition (colored De Bruijn) diagram. The left cell in the edge rule $\{ \square,\blacksquare \} \rightarrow \{ \square,\blacksquare \}$ is read in \textit{from} the cell configuration, and the right cell is written \textit{to} the configuration. Red edges visually indicate this output cell is $\blacksquare$ and green edges indicate the output is $\square$. As an edge is traversed, the neighborhood is shifted left by one cell. 
    }
    \label{fig:state_transition_graph}
\end{figure}

\begin{figure}[h]
    \centering

    \includegraphics[width=\linewidth]{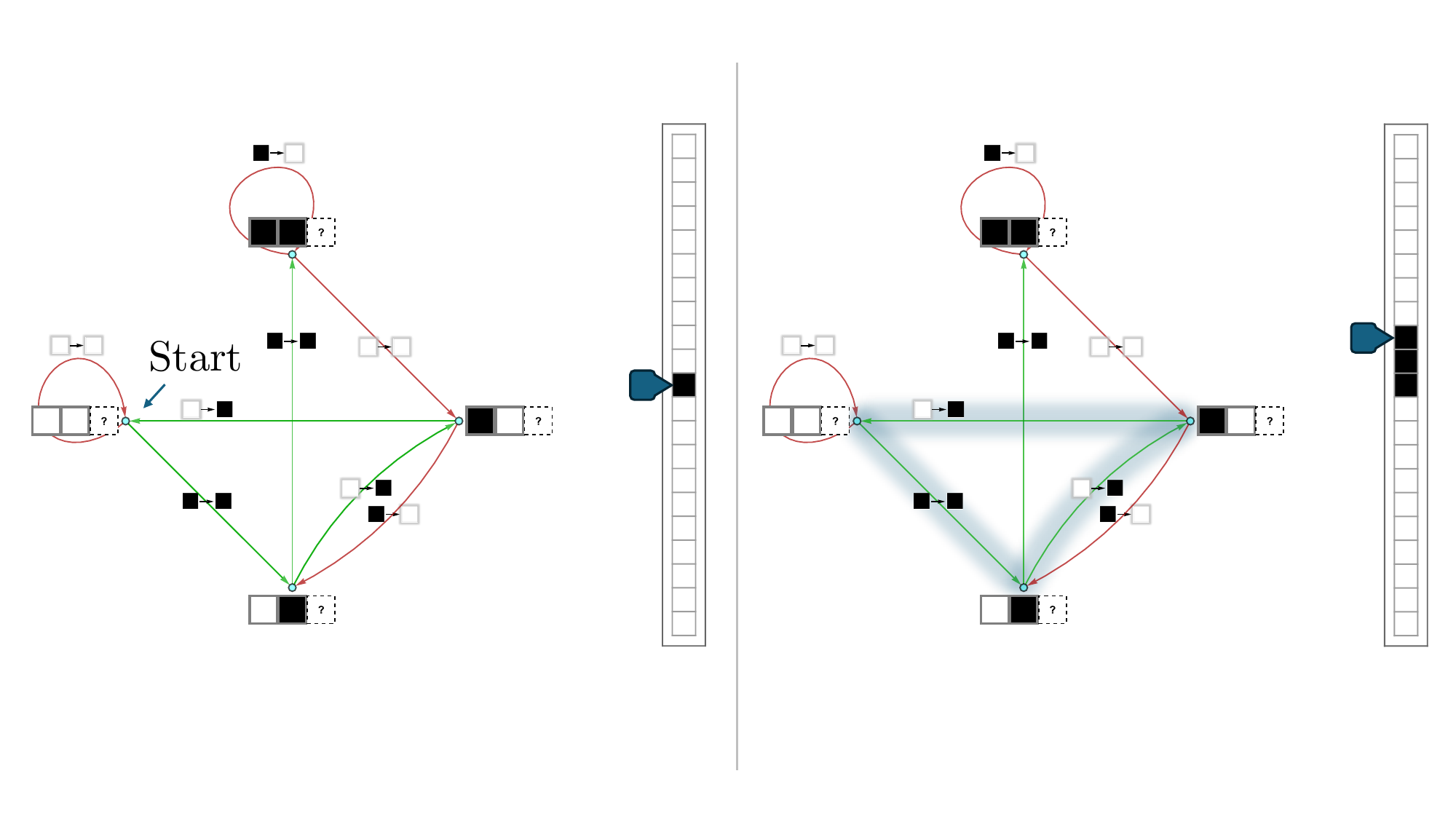}
    \caption{Computing the next generation of the ECA Rule 30 is taking a walk in the state transition graph. The figure on the right shows the walk taken in the graph in order to produce $\langle \blacksquare,\blacksquare,\blacksquare \rangle$.}
    \label{fig:walkGraph}
\end{figure}

\begin{figure}[!h]
    \centering
    \includegraphics[width=\textwidth]{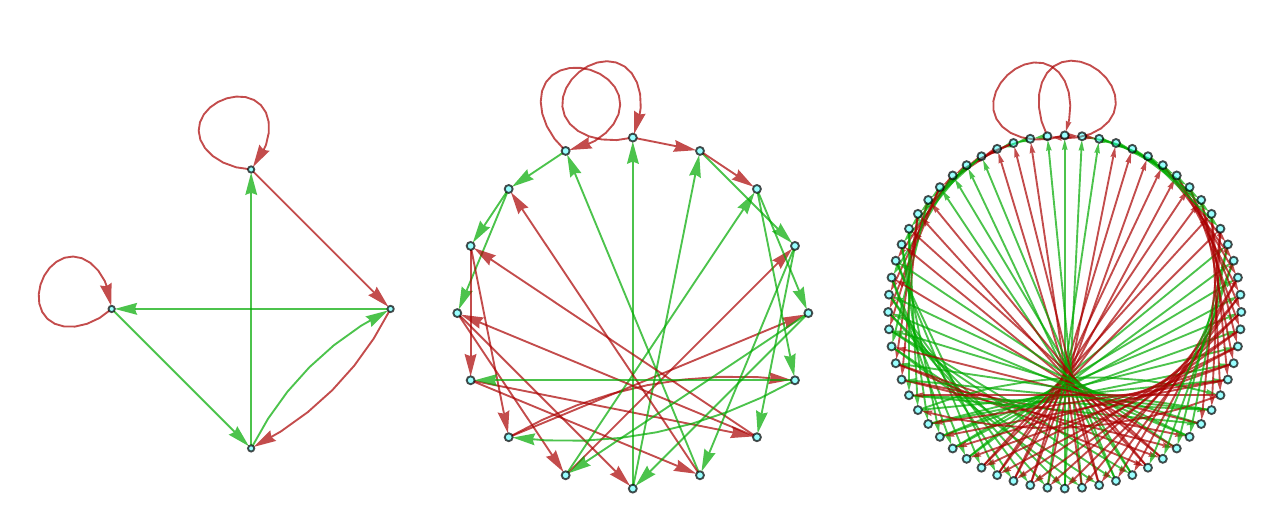}
   \caption{
   1-fold, 2-fold, and 3-fold composition state transition diagrams of the Rule $30$ ECA in circular embeddings \cite{Sutner}. They have a regular structure, with any vertex at index $i$ along the circular embedding having outgoing connections to vertices at $2i$ and $2i+1$ modulo the number of vertices.}
    \label{fig:rule_30_compositions}
\end{figure}

\newpage
\subsection{Experiment code}
The following code can be compiled using the C++17 standard. Alternatively, it can be implemented in a few lines of compiled Wolfram Language code.

\lstdefinestyle{mystyle}{
backgroundcolor=\color{white},   
basicstyle=\footnotesize,        
breakatwhitespace=false,         
breaklines=true,                 
captionpos=b,                    
commentstyle=\color{ForestGreen},    
deletekeywords={...},            
escapeinside={\%*}{*)},          
extendedchars=true,              
keepspaces=true,                 
keywordstyle=\color{Blue},       
language=C++,                 
morekeywords={*,...},            
numbers=none,                    
showspaces=false,                
showstringspaces=false,          
showtabs=false,                  
stepnumber=2,                    
stringstyle=\color{red},     
tabsize=2,                       
}

\lstinputlisting[language=C++,style=mystyle]{cpp_algo.cpp}

\printbibliography

\end{document}